\newtheorem{lemma}{Lemma}
\newtheorem{corollary}{Corollary}
\def\BibTeX{{\rm B\kern-.05em{\sc i\kern-.025em b}\kern-.08em
    T\kern-.1667em\lower.7ex\hbox{E}\kern-.125emX}}
\begin{document}

\title{Optimal On-Off Transmission Schemes for Full Duplex Wireless Powered Communication Networks\\
\thanks{This work is supported by Scientific and Technological Research Council of Turkey Grant $\#$117E241.}}
\author{\IEEEauthorblockN{Muhammad Shahid Iqbal}
\IEEEauthorblockA{\textit{Electrical and Electronics Engineering} \\
\textit{Koc University}\\
Istanbul, Turkey \\
miqbal16@ku.edu.tr}
\and
\IEEEauthorblockN{Yalcin Sadi}
\IEEEauthorblockA{\textit{Electrical and Electronics Engineering} \\
\textit{Kadir Has University}\\
Istanbul, Turkey \\
yalcin.sadi@khas.edu.tr }
\and
\IEEEauthorblockN{Sinem Coleri}
\IEEEauthorblockA{\textit{Electrical and Electronics Engineering} \\
\textit{Koc University}\\
Istanbul, Turkey \\
scoleri@ku.edu.tr}
}
\maketitle

\begin{abstract}
In this paper, we consider a full duplex wireless powered communication network where multiple users with radio frequency energy harvesting capability communicate to an energy broadcasting hybrid access point. We investigate the minimum length scheduling and sum throughput maximization problems considering on-off transmission scheme in which users either transmit at a constant power or remain silent. For minimum length scheduling problem, we propose a polynomial-time optimal scheduling algorithm. For sum throughput maximization, we first derive the characteristics of an optimal schedule and then to avoid intractable complexity, we propose a polynomial-time heuristic algorithm which is illustrated to perform nearly optimal through numerical analysis.
\end{abstract}

\begin{IEEEkeywords}
Energy Harvesting, Full Duplex, Wireless Powered Communication Networks, Throughput Maximization, Schedule Length Minimization.
\end{IEEEkeywords}

\section{Introduction} \label{sec:introduction}

The lifetime of a wireless sensor network is usually battery dependent requiring replacement or recharging while the former is either very difficult or infeasible. Recently, radio frequency (RF) energy harvesting arises as the most suitable technology to provide perpetual energy eliminating the need to replace batteries due to design of highly efficient RF energy harvesting hardware \cite{RF}. In wireless powered communication networks (WPCN), wireless users with RF energy harvesting capability; i.e., sensors and machine type communication (MTC) devices, communicate to a hybrid access point (HAP) in the uplink using the energy transferred by the HAP in the downlink  \cite{WPCN_HTT}.

The sum throughput maximization (STM) and minimum length scheduling (MLS) problems have been studied for WPCNs under various models and assumptions. In the half-duplex WPCN models, the users transmit information and harvest energy in non-overlapping time intervals. For half-duplex models, several studies such as \cite{WPCN_commonThroughputmax,WPCN_minThroughputmax,WPCN_weightedThroughputmax} have cosidered the WPCN for common, minimum and weighted throughput maximization respectively. Whereas, \cite{HD-TM,I_pehlwan,Elif_PIMRC} have considered single and multi-antenna WPCN systems for MLS problems. On the other hand, WPCN studies have recently incorporated full duplex technology allowing the access point and the users to achieve simultaneous energy transfer and data communication. Self-interference is the major setback for full-duplex transmission, however, due to recent advances in self-interference cancellation techniques and their practical implementations under the development of 5G and beyond networks, full-duplex has became realizable. The authors in \cite{FD_WPCN_FD_transceiver,FD_WPCN_HD,FD_WPCN_Kang} have considered the full duplex models for MLS and STM. Due to full duplex, a wireless user can harvest energy during both its own and other users transmission, making scheduling important which is missing in these works. Whereas, only few studies \cite{WPCN_scheduling_Kalpant, WPCN_scheduling_Jie, DR_shahid} have paid attention to scheduling but in either a limited context or employing a computationally-inefficient technique. In \cite{WPCN_scheduling_Kalpant}, the scheduling frame is divided into a fixed number of equal length time slots resulting in underutilization of the resources. In \cite{WPCN_scheduling_Jie}, authors have used Hungarian algorithm to find the schedule which is computationally very complex for such sequence dependent transmissions, whereas, \cite{DR_shahid} considered the discrete rate optimization problem. Moreover, these studies have considered simplistic models compared to the system model discussed in this paper. Due to low processing cost and use of simple and cheap power amplifiers, on-off transmission scheme can be very useful for inexpensive sensor networks leading to affordable and widespread deployments of IoT applications. However, in the context of WPCN, no previous study have considered this scheme except \cite{ON-OFF2,ON-OFF1} where the authors have analysed the average error rate and outage probability for a single user system. In this paper, we incorporate on-off transmission scheme in which the users either transmit with a constant power or remain silent if the user can not afford transmission at this power.

The goal of this paper is to revisit MLS and STM problems for determining the optimal time allocation and scheduling considering on-off transmission scheme and a realistic non-linear energy harvesting model in a full-duplex WPCN. The main contributions are listed as follows:

\begin{itemize}
\item We characterize the Minimum Length Scheduling Problem (MLSP) and Sum Throughput Maximization Problem (STMP) and mathematically formulate each as a mixed integer linear programming (MILP) problem.
\item For MLSP, we propose an optimal polynomial-time algorithm incorporating optimal time allocation and scheduling policies.
\item For STMP, upon analyzing the optimality conditions on the optimization variables, we propose a polynomial-time heuristic algorithm and illustrate that it performs nearly optimal for various simulation scenarios.
\end{itemize}
The rest of the paper is organized as follows. The system model and the related assumptions are described in Section \ref{sec:system_model}. The formulation of minimum length scheduling problem and proposed optimal algorithm is presented in section \ref{sec:MLSP}. Section \ref{sec:STMP} covers the sum throughput maximization problem formulation and the proposed scheduling algorithm for the formulated problem. The numerical results are provided  in Section \ref{sec:performance} and concluding remarks are discussed in the Section \ref{sec:conclusion}.

\section{System Model and Assumptions} \label{sec:system_model}

We describe the WPCN architecture and the assumptions used throughout the paper as follows:

\begin{figure}[t]
\centering
\includegraphics[width= 0.8 \linewidth]{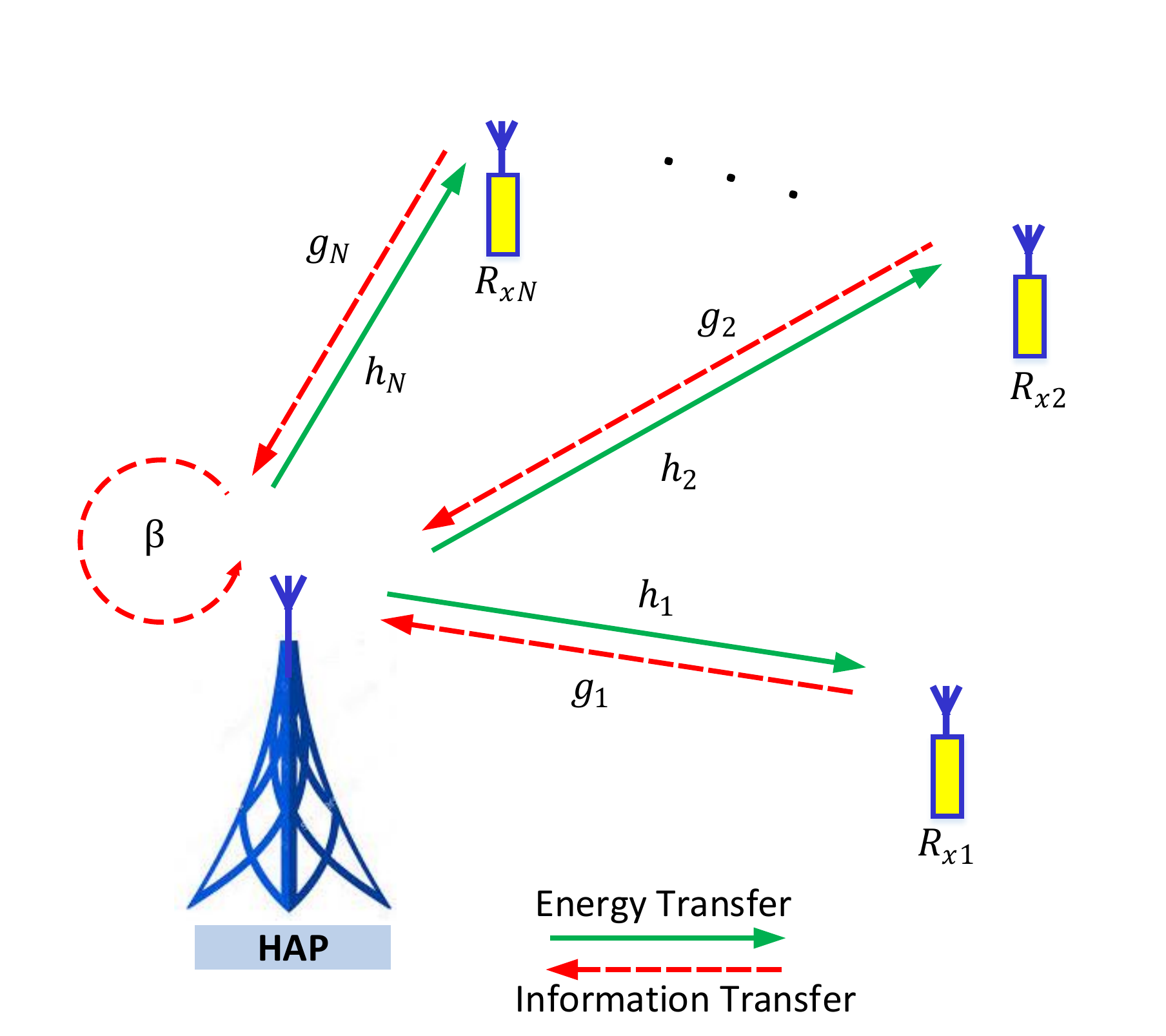}
\caption{Architecture for Wireless Powered Communication Network} \label{fig:WPCN}
\end{figure}

\begin{enumerate}
\item The WPCN architecture, as depicted in Fig. \ref{fig:WPCN}, consists of a HAP and N users; i.e., machine type communications devices and sensors. The HAP and the users are equipped with one full-duplex antenna for simultaneous wireless energy transfer and data transmission on downlink and uplink channels, respectively. The uplink channel gain from user $i$ to the HAP and the downlink channel gain from the HAP to user $i$ are denoted by $g_i$ and $h_i$, respectively. 
\item The HAP has a stable energy supply and continuously broadcasts wireless energy with a constant power $P_h$. Each user $i$ harvests the radiated energy from the HAP and stores in a rechargeable battery. Each user has an initial energy $B_i$ stored in its battery at the beginning of the scheduling frame which includes the harvested and unused energy in the previous scheduling frames.
\item A realistic non-linear energy harvesting model is assumed which is based on the logistic function \cite{NLEH_model_02,NLEH_01} due to its close performance to the experimental results proposed in \cite{NLEH_Prac_01,NLEH_Prac_02,NLEH_Prac_03}. For such a non-linear energy harvesting model the energy harvesting rate for user $i$ is given as: 
\begin{equation}
C_i=\dfrac{P_s[\Psi_i-\Omega_i]}{1-\Omega_i}
\end{equation}
Where, $\Omega_i=\dfrac{1}{1+e^{A_iB_i}}$ is a constant to make sure that zero-input will generate zero-output response, $P_s$ is the maximum harvested power in the saturation state and $\Psi_i$ is a logistic function related to user $i$ which is defined by:
\begin{equation}
\Psi_i=\dfrac{1}{1+e^{-A_i(h_iP_h-B)}}
\end{equation}
Where, $A$ and $B$ are the input power and turn-on threshold constants for the non-linear charging rate respectively. For a given energy harvesting circuit, the parameters $P_s$, $A$ and $B$ can be determined by curve fitting.
\item We consider time division multiple access as medium access control for the uplink data transmissions from the users to the HAP. The time is partitioned into scheduling frames which are further divided into variable-length time slots each of which is allocated to a particular user. 
\item We use constant power model in which all users have a constant transmit power $P_{max}$ during their data transmissions which is imposed to limit the interference to nearby systems.
\item We use constant rate transmission model, in which Shannon capacity formulation for an AWGN channel is used in the calculation of transmission rate $r_i$ of user $i$ as
 \begin{equation} \label{transmission_rate}
 r_i = W \log_{2}(1+k_i P_{max}),
 \end{equation}
where $W$ is the channel bandwidth and $k_i$ is defined as $g_i/(N_oW+\beta P_h)$. The term $\beta P_h$ is the self interference at the HAP and $N_o$ is the noise power density.
\end{enumerate}

\section{Minimum Length Scheduling Problem} \label{sec:MLSP}

In this section, we introduce the minimum length scheduling problem, denoted by MLSP. The joint optimization of the time allocation and scheduling with the objective of minimizing the schedule length is formulated as follows:\\

\textbf{MLSP}:
\begin{subequations} \label{opt_MLSP}
\begin{align}
&\textit{minimize}
& & \sum_{i=0}^{N}  \tau_i  \label{obj_MLSP}\\
& \textit{subject to}
&& W \tau_i \log_2(1+k_iP_{max}) \geq D_i, \label{traffic_MLSP}\\
&&& B_i+C_i ( \tau_0+\sum_{j=1}^{N}a_{ji}\tau_j +\tau_i )-P_{max}\tau_i \geq 0,  \label{EH_MLSP}\\
&&& a_{ij}+a_{ji}=1, \label{ordering_MLSP}\\
& \textit{variables}
& & \tau_i\geq 0, \hspace*{0.1cm} a_{ij} \in \{0,1\}.\label{vars}
\end{align}
\end{subequations}

The variables of the problem are $\tau_i$, the transmission time of user $i$, and $a_{ij}$, a binary variable that takes value $1$ if user $i$ is scheduled before user $j$ and $0$ otherwise. In addition, $\tau_0$ denotes an initial unallocated time in which all users harvest energy without transmitting data.
The objective of the problem is to minimize the schedule length which is equal to the completion time of the transmissions of all users, as given by Eq. (\ref{obj_MLSP}). Eq. (\ref{traffic_MLSP}) represents the traffic requirements of the users where $D_i$ is the amount of data that should be transmitted by user $i$. Energy causality constraint is given by Eq. (\ref{EH_MLSP}): The energy consumed during data transmission cannot exceed the total amount of available energy including both the initial battery level and the harvested energy until and during the transmission of a user. Eq. (\ref{ordering_MLSP}) represents the scheduling order constraint.

In the following, we investigate the characteristics of an optimal solution for MLSP.

\begin{lemma} \label{lemma:time}
There exists an optimal solution of MLSP in which the traffic requirement constraint (\ref{traffic_MLSP}) is satisfied with equality; i.e., each user $i$ transmits exactly $D_i$ bits in the scheduling frame.
\end{lemma}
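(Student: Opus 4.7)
The plan is to proceed by a swap/construction argument. Starting from any optimal schedule, I would isolate a user $i$ whose traffic constraint (\ref{traffic_MLSP}) is strict, and build a new feasible schedule that attains the same objective value while making $i$'s traffic constraint tight. Iterating this procedure over all users with slack yields the desired optimal solution.

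Concretely, let $(\{\tau_j\}, \{a_{jk}\})$ be an optimal solution and suppose $W\tau_i\log_2(1+k_iP_{max}) > D_i$ for some user $i$. Define $\tau_i^{\star}=D_i/(W\log_2(1+k_iP_{max}))$ and $\delta=\tau_i-\tau_i^{\star}>0$. The candidate modification is to set $\tau_i\leftarrow \tau_i^{\star}$ and $\tau_0\leftarrow \tau_0+\delta$, leaving all other $\tau_j$ and all $a_{jk}$ unchanged. Immediately, the objective $\sum_{j=0}^{N}\tau_j$ is preserved, so the new schedule remains a minimizer; the traffic constraint for $i$ becomes tight by construction; and constraints (\ref{ordering_MLSP}) and (\ref{vars}) are untouched.

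The substantive step is to verify the energy causality constraint (\ref{EH_MLSP}) for every user. For user $i$ itself, the harvesting window $\tau_0+\sum_j a_{ji}\tau_j+\tau_i$ is unchanged (the $+\delta$ from $\tau_0$ cancels the $-\delta$ from $\tau_i$), while the consumption term $P_{max}\tau_i$ drops by $P_{max}\delta$, so the constraint is strictly relaxed. For a user $m$ scheduled after $i$ (i.e.\ $a_{im}=1$), the term $\tau_i$ appears in its harvesting sum and the same cancellation with $\tau_0$ applies, leaving $m$'s constraint intact. For a user $m$ scheduled before $i$ (i.e.\ $a_{mi}=1$), $\tau_i$ does not enter its harvesting sum, so the increase of $\tau_0$ by $\delta$ only adds $C_m\delta\geq 0$ of harvested energy and again relaxes the constraint. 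Thus feasibility is preserved throughout, and the modified solution is optimal with user $i$'s traffic constraint now tight.

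The main obstacle I would anticipate is exactly this energy-side bookkeeping: naively shrinking $\tau_i$ alone would violate causality for later users because it shortens their harvesting window, so the compensating $+\delta$ shift into the initial idle interval $\tau_0$ is the crucial device that keeps every later user's energy budget unaffected while allowing $\tau_i$ to be tightened. Applying this swap sequentially to each user with strict traffic inequality (the operations are independent since each only touches $\tau_0$ and the chosen $\tau_i$) produces an optimal schedule in which (\ref{traffic_MLSP}) holds with equality for all users, establishing the lemma.
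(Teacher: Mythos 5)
Your proposal is correct and follows essentially the same argument as the paper: shrink the slack user's transmission time to its minimum $\tau_i^{\min}$ and absorb the difference into the initial idle interval $\tau_0$, which preserves the schedule length and every user's completion time (hence harvested energy) while only reducing the slack user's consumption. Your case-by-case verification of the energy causality constraint is simply a more explicit version of the paper's observation that completion times of earlier users increase and those of later users stay the same.
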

\begin{proof}
Suppose that $\tau^*=[\tau_0^*,\tau_1^*,\tau_2^*,...,\tau_N^*]$ is the optimal transmission time for a set of users. Then, the optimal schedule length is given by $\sum_{i=0}^{N} \tau_i^*$. Further suppose that, for a user $j$,
\begin{equation} \label{eq:min_time}
\tau_j^* > \tau_j^{min} = \dfrac{D_j}{Wlog_2(1+k_j P_{max})} 
\end{equation}
such that it transmits more than its traffic requirement $D_j$, where $\tau_j^{min}$ denotes the minimum transmission time required by user $j$ to fulfill its traffic requirement. Let $\Delta \tau_j = \tau_j^* - \tau_j^{min}$. The optimal schedule can be updated as $\tau_0^*= \tau_0^*+ \Delta \tau_j$ and $\tau_j^*= \tau_j^*- \Delta \tau_j$. Then, the schedule length does not change while the energy causality requirement of the users are not violated since the completion time of users $i< j$ increase and the completion time of users $i\geq j$ remain same. Therefore, there exists an optimal solution in which  the traffic requirement constraint (\ref{traffic_MLSP}) is satisfied with equality for all users.
\end{proof}

Based on Lemma \ref{lemma:time}, the required energy for the transmission of a user $i$ is given as
\begin{equation}
E_i=\tau_i^{min} P_{max}=\dfrac{D_i P_{max}}{Wlog_2(1+k_iP_{max})}
\end{equation}
Let $s_i^{min}$ denote the minimum starting time for a user $i$ such that it can harvest enough energy to complete its transmission. Then, considering the energy causality constraint (\ref{EH_MLSP}), $s_i^{min}$ is given by
\begin{equation}
s_i^{min}= \frac{E_i - B_i - \tau_i^{min} C_i}{C_i} 
\end{equation}

\begin{lemma}
In an optimal solution of MLSP, users are allocated in increasing order of minimum starting time values.
\end{lemma}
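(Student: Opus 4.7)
The plan is to use a standard adjacent-swap exchange argument, reducing first to the case where every user transmits exactly its minimum required time. By Lemma~\ref{lemma:time}, I may assume without loss of generality that $\tau_i = \tau_i^{min}$ for every user $i \geq 1$. The schedule length $\tau_0 + \sum_{i=1}^{N} \tau_i^{min}$ then depends on the ordering only through $\tau_0$, so to prove that there is an optimal schedule sorted by $s_i^{min}$ it suffices to show that any feasible schedule can be rearranged into that sorted order without changing $\tau_0$ or $\tau_i^{min}$.

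First I would rewrite the energy causality constraint in a more convenient form. If user $i$ begins transmitting at time $s_i = \tau_0 + \sum_{j \text{ before } i} \tau_j^{min}$, then constraint (\ref{EH_MLSP}) becomes $B_i + C_i(s_i + \tau_i^{min}) \geq P_{max}\tau_i^{min} = E_i$, which is equivalent to $s_i \geq s_i^{min}$ by the definition of $s_i^{min}$. So the only thing feasibility requires is that each user starts no earlier than its own $s_i^{min}$.

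Next comes the exchange step. Suppose in some feasible schedule there are two adjacent users $i$ and $j$, with $i$ immediately before $j$, such that $s_i^{min} > s_j^{min}$. Swapping them leaves the start times of all other users unchanged, because the total duration $\tau_i^{min} + \tau_j^{min}$ of the block they occupy is preserved. The new start time of $j$ is the old start time of $i$, which satisfies $s_i \geq s_i^{min} > s_j^{min}$, so $j$'s energy constraint holds. The new start time of $i$ is $s_i + \tau_j^{min} \geq s_i^{min} + \tau_j^{min} \geq s_i^{min}$, so $i$'s constraint also holds. Thus the swap produces a feasible schedule of the same length. Iterating these adjacent swaps (a bubble sort on the sequence of $s_i^{min}$ values) terminates at a schedule in which users appear in increasing order of $s_i^{min}$, proving the claim.

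The main obstacle I anticipate is making sure the swap does not disturb the harvest-time bookkeeping of any other user and that the reduction to $\tau_i = \tau_i^{min}$ is compatible with the swap, but both turn out to be immediate: the block $[s_i,\, s_i + \tau_i^{min} + \tau_j^{min}]$ is unchanged in position and length, so every user scheduled before or after this block sees exactly the same harvesting interval and start time as before, leaving only the two swapped users' constraints to verify.
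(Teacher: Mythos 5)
Your proposal is correct and follows essentially the same adjacent-swap exchange argument as the paper: out-of-order neighbors are interchanged, the later-scheduled user inherits the earlier start time (feasible since $s_i \geq s_i^{min} > s_j^{min}$), and the other user only starts later, so feasibility and schedule length are preserved. Your additional steps — invoking Lemma~\ref{lemma:time} to fix $\tau_i = \tau_i^{min}$, rewriting the energy constraint as $s_i \geq s_i^{min}$, and noting the bubble-sort iteration — are explicit versions of what the paper leaves implicit.
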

\begin{proof}
Suppose that $\tau^*=[\tau_0^*,\tau_1^*,\tau_2^*,...,\tau_N^*]$ is the optimal transmission time for a set of users. Let $s^*=[s_1^*,s_2^*,...,s_N^*]$ denote the starting time of the users such that $s_1^*<s_2^*<...<s_N^*$ and $s_i^*\geq s_i^{min}$ for all $i\in[1,N]$. Further suppose that for two successively allocated users $j$ and $j+1$, $s_j^{min}>s_{j+1}^{min}$. Hence, $s_{j+1}^* > s_{j}^*\geq s_j^{min}>s_{j+1}^{min}$. The optimal schedule can be updated by interchanging the transmission order of users $j$ and $j+1$ such that user $j+1$ is scheduled at starting time $s_{j+1}=s_{j}^*$ and user $j$ is scheduled just after user $j+1$ completes its transmission at $s_{j}^*+\tau_{j+1}^*$; i.e., $s_{j}=s_{j}^*+\tau_{j+1}^*$. Since $s_{j+1}=s_{j}^*>s_{j+1}^{min}$ and $s_{j}>s_{j}^*\geq s_j^{min}$, both users satisfy their energy causality requirements. 
\end{proof}

The foregoing lemma suggests that at any particular time instant $t$, it is an optimal policy to schedule any user $i$ with $s_i^{min}-t$ is nonpositive and minimum among all $i\in[1,N]$. Then, the optimal schedule should start with an initial unallocated time $\tau_0=\min_{i\in[1,N]} s_i^{min}$ and schedule the user with minimum $s_i^{min}$. Then, it needs to schedule all users in increasing order of $s_i^{min}$ values. Based on the foregoing discussion, we next introduce the Minimum Length Scheduling Algorithm (MLSA), given in Algorithm \ref{algo_MLSA}, for MLSP. 

Input of MLSA algorithm is a set of users, denoted by $\cal{F}$, with the characteristics specified in Section \ref{sec:system_model}. It starts by initializing the schedule $\cal{S}$ where the $i^{th}$ element of $\cal{S}$ is the index of the user scheduled in the $i^{th}$ time slot and the schedule length $t(\cal{S})$. At each step, MLSA picks the user with minimum $s_i^{min}$ value among the unscheduled users. Then, the next time slot is allocated to this user at earliest possible time instant by updating $\tau_0$ accordingly. MLSA terminates when all users in $\cal{F}$ are scheduled and outputs schedule $\cal{S}$ and corresponding set of transmission times $\tau$ including minimum possible $\tau_0$ value required for the successive and continuous transmissions of the users in $\cal{F}$. The computational complexity of MLSA is $\mathcal{O}(N^2)$.

\begin{algorithm} 
\caption{Minimum Length Scheduling Algorithm}  \label{algo_MLSA}
\begin{algorithmic}[1] 
\STATE \textbf{input:} set of users $\cal{F}$
\STATE \textbf{output:} schedule $\cal{S}$, set of transmission times $\tau$, schedule length $t(\cal{S})$
\STATE $\cal{S}$ $\leftarrow$ $\emptyset$, $t(\cal{S})$ $\leftarrow$ 0, $\tau_0$ $\leftarrow$ 0,
\WHILE {$\textbf{F} \neq \emptyset$}
\STATE $m \leftarrow$ arg$\min_{i\in \cal{F}} s_i^{min}$,
\STATE $\cal{S}$ $\leftarrow$ $\cal{S}$ + $\lbrace m\rbrace$,
\STATE $\cal{F}$ $\leftarrow$ $\cal{F}$ - $\lbrace m \rbrace$,
\STATE $\tau_m$ $\leftarrow$ $D_m/(Wlog_2(1+k_mP_{max}))$,
\STATE $\tau_m^{waiting}$ $\leftarrow$ $\max\left\lbrace 0, s_m^{min}-t(\cal{S})\right\rbrace $,
\STATE $\tau_{0}$ $\leftarrow$ $\tau_{0}+\tau_i^{waiting}$,
\STATE $t(\cal{S})$ $\leftarrow$ $t(\cal{S})$ $+$ $\tau_m+\tau_m^{waiting}$,
\ENDWHILE
\end{algorithmic}
\end{algorithm} 

\section{Sum Throughput Maximization Problem} \label{sec:STMP}

In this section, we introduce the sum throughput maximization problem, denoted by STMP. The joint optimization of the time allocation and scheduling is formulated as follows:\\

\textbf{STMP}:
\begin{subequations} \label{opt_STMP}
\begin{align}
&\textit{maximize}
& & \sum_{i=1}^{N}  \tau_i W \log_2(1+k_i P_{max}) \label{obj_STMP}\\
& \textit{subject to}
&& \sum_{i=0}^{N}\tau_i\leq 1, \label{length_STMP}\\
&&& B_i+C_i ( \tau_0+\sum_{j=1}^{N}a_{ji}\tau_j +\tau_i )-P_{max}\tau_i \geq 0,  \label{EH_STMP}\\
&&& a_{ij}+a_{ji}=1, \label{ordering_STMP}\\
& \textit{variables}
& & \tau_i\geq 0, \hspace*{0.1cm} a_{ij} \in \{0,1\}.\label{vars}
\end{align}
\end{subequations}

The objective of the problem is to maximize the sum of the throughput of the users, as given by Eq. (\ref{obj_STMP}). Similar to MLSP formulation, STMP formulation includes the energy causality and scheduling order constraints given by Eqs. (\ref{EH_STMP}) and (\ref{ordering_STMP}), respectively. Besides, the problem formulation employs a normalized schedule length of $1$, as given by Eq. (\ref{length_STMP}), without loss of generality.

STMP formulation is a mixed integer linear programming (MILP) problem thus difficult to solve for the global optimum \cite{opt_book}. On the other hand, for a predetermined transmission order of the users, i.e., $a_{ij}$ values are given, STMP problem is a convex problem for which there exists polynomial-time solution algorithms. Hence, a straightforward solution to STMP would be to enumerate all possible transmission orders, solve each of them and determine the one yielding maximum throughput. However, since there are $N!$ possible transmission orders, such an optimal solution method is intractable. In the following, we present a polynomial-time heuristic algorithm by investigating the characteristics of an optimal solution.

In the following lemma, we present an optimality condition on scheduling suggesting a prioritization among users based on their transmission rates.

\begin{lemma} \label{lemma:rate1}
In the optimal solution of STMP, for any two users $i$ and $j$ such that $r_i>r_j$, if $\tau_i = 0$, then $\tau_j = 0$.
\end{lemma}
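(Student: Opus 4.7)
The approach is by contradiction. Suppose, toward a contradiction, there exists an optimal schedule in which $r_i > r_j$, $\tau_j > 0$, and $\tau_i = 0$. I will build a feasible perturbation of this schedule that strictly improves the objective (\ref{obj_STMP}), contradicting optimality.

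The perturbation is a local reallocation: pick a small $\delta \in (0, \tau_j)$, shrink user $j$'s transmission time to $\tau_j - \delta$, and insert user $i$ immediately after user $j$ in the transmission order with $\tau_i = \delta$ (so $a_{ji}=1$, and every user that was after $j$ is still after $i$). The structural observation driving the argument is that the combined duration of the new user-$j$ slot and the new user-$i$ slot equals the original user-$j$ slot. Therefore the total schedule length is preserved (so (\ref{length_STMP}) still holds), and for every user $k$ that preceded $j$ or followed $j$ in the original schedule the harvesting-time expression $\tau_0 + \sum_\ell a_{\ell k}\tau_\ell + \tau_k$ appearing in (\ref{EH_STMP}) is identical to its previous value, so their constraints are unchanged.

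It remains to check users $j$ and $i$. For user $j$, both its harvesting time and its transmission time decrease by $\delta$, so the slack in (\ref{EH_STMP}) changes by $(P_{max}-C_j)\delta \geq 0$ under the standard WPCN assumption that the harvesting rate cannot exceed the transmit power; the constraint remains satisfied. For user $i$, now placed directly after user $j$, its new harvesting-time expression equals $T_j^{\mathrm{old}} = \tau_0 + \sum_\ell a_{\ell j}\tau_\ell + \tau_j$, so (\ref{EH_STMP}) becomes $B_i + C_i T_j^{\mathrm{old}} \geq P_{max}\delta$; since $T_j^{\mathrm{old}} \geq \tau_j > 0$ and $C_i > 0$, this holds for all sufficiently small $\delta>0$. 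The resulting change in the objective is $\delta r_i - \delta r_j = \delta(r_i - r_j) > 0$, producing the required contradiction. The main obstacle to this strategy is guaranteeing that activating a previously silent user does not diminish the harvested energy of users downstream; this is exactly what the placement of $i$ immediately after $j$ achieves, since $j$'s donated airtime is replaced one-for-one by $i$'s new airtime, leaving the total preceding transmission activity of every later user invariant.
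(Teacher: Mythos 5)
Your proof is correct and follows essentially the same exchange argument as the paper: split user $j$'s slot, donate a small piece $\delta$ to user $i$, and gain $\delta(r_i-r_j)>0$ in the objective. The paper simply asserts that a feasible $\tau'>0$ exists, whereas you verify the energy-causality constraints explicitly (and your one caveat, the case $C_j>P_{max}$, is harmless since the constraint for $j$ then holds trivially with every term nonnegative), so your version is if anything more complete.
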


\begin{proof}
Suppose that $\tau^*=[\tau_1^*,\tau_2^*,...,\tau_N^*]$ is the optimal transmission time for a set of users such that  $\tau_i^*=0$ and $\tau_j^*>0$ for some $i$ and $j$ such that $r_i>r_j$. For some $\tau^{'} >0$ which will not violate the energy causality requirement of user $i$, transmission time of user $j$ can be divided into two slots of lengths  $\tau_j^*-\tau^{'}$ and $\tau^{'}$, each allocated to users $j$ and $i$, respectively. Then, sum throughput is increased by $\tau^{'} (r_i-r_j)$ which is strictly positive. This is a contradiction.
\end{proof}

While Lemma \ref{lemma:rate1} indicates that high rate users should be prioritized for sum throughput maximization, an optimal schedule does not necessarily contain all users as long as maximum throughput is achieved using a subset of users. However, the following corollary of Lemma \ref{lemma:rate1} states that the maximum rate user should be given a nonzero transmission time.

\begin{corollary} \label{cor:1}
Let user $m$ has $r_m=\max_i r_i$. Then, in an optimal solution, $\tau_m>0$.
\end{corollary}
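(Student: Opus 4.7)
The plan is a proof by contradiction driven entirely by Lemma \ref{lemma:rate1}. Suppose, for contradiction, there exists an optimal solution of STMP with $\tau_m^*=0$. Since $r_m=\max_i r_i$, every user $j\ne m$ satisfies $r_j\le r_m$, and for those with the strict inequality $r_j<r_m$, Lemma \ref{lemma:rate1} applied to the pair $(m,j)$ forces $\tau_j^*=0$. Consequently, in this supposed optimum the only users that can carry positive transmission time are those whose rate equals $r_m$ exactly.

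I would then split into two sub-cases. If no user has $\tau_j^*>0$, then the sum-throughput objective (\ref{obj_STMP}) equals zero; this is strictly beaten by any feasible schedule that allots $\tau_0$ large enough for $m$ to harvest a positive amount of energy and then assigns a sufficiently small $\tau_m>0$ consistent with the energy-causality constraint (\ref{EH_STMP}) and the length bound (\ref{length_STMP}). Hence the all-zero allocation cannot be optimal, a contradiction. In the remaining sub-case, some user $k\ne m$ with $r_k=r_m$ has $\tau_k^*>0$; mirroring the infinitesimal-shift construction in the proof of Lemma \ref{lemma:rate1}, I would move a small $\tau'>0$ of transmission time from $k$ to $m$, choosing $\tau'$ small enough to keep $m$'s energy-causality constraint (\ref{EH_STMP}) satisfied. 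Because $r_m=r_k$, the objective is unchanged, producing an alternative optimal solution with $\tau_m>0$ and contradicting the assumption that no optimum has this property.

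The main obstacle is the tie case $r_k=r_m$: Lemma \ref{lemma:rate1}'s strict-improvement trick no longer applies directly, so the contradiction must be phrased as the existence of an alternate optimum rather than a throughput gain. Verifying that the infinitesimal shift keeps $m$'s energy-causality constraint feasible, which relies on $m$ having access to the harvested energy $C_m\bigl(\tau_0+\sum_j a_{jm}\tau_j\bigr)$ accumulated by the time of its slot, is the step requiring the most care; choosing $\tau'$ sufficiently small makes this straightforward but needs to be stated explicitly.
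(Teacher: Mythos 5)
Your proof is correct and follows the route the paper intends: the corollary is stated without proof as an immediate consequence of Lemma \ref{lemma:rate1}, and your contradiction argument is exactly that consequence spelled out. You are in fact more careful than the paper, since you explicitly dispose of the two cases the lemma alone does not settle --- the all-zero schedule (beaten by giving $m$ a small feasible slot, which implicitly requires $B_m>0$ or $C_m>0$) and the rate-tie case (where the conclusion holds in the weaker sense that \emph{some} optimal solution has $\tau_m>0$, obtained by your throughput-preserving time shift).
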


Moreover, if the maximum rate user has enough initial battery level to transmit with $P_{max}$ during the entire scheduling frame, then, it needs to be allocated to the entire scheduling frame in the optimal schedule. Next, based on the foregoing analysis, we propose the Max-Rate First Scheduling Algorithm (MRSA), given in Algorithm \ref{algo_MRSA}.  Input of MRSA algorithm is a set of users, denoted by $\cal{F}$, sorted in decreasing order of transmission rates. It starts by initializing the unallocated time duration to $1$. At each step, MRSA picks the user with maximum rate and determines the maximum feasible transmission time it can allocate to that user. MRSA performs allocation starting from the end of the scheduling frame to allow higher rate users to harvest more energy. Then, it updates the unallocated time duration accordingly and continues with the next user. If the unallocated time duration is $0$ at any step, MRSA terminates by not scheduling the remaining users. Otherwise it schedules all users and the remaining unallocated time is specified as $\tau_0$. Upon termination, MLSA outputs the schedule $\cal{S}$ consisting of the allocated users and the corresponding sum throughput $R(\cal{S})$. The computational complexity of MRSA is $\mathcal{O}(N)$.

\begin{algorithm}
\caption{Max-Rate First Scheduling Algorithm}  \label{algo_MRSA}
\begin{algorithmic}[1] 
\STATE \textbf{input:} set of users $\cal{F}$ sorted in decreasing order of rates
\STATE \textbf{output:} schedule $\cal{S}$, set of transmission times $\tau$, sum throughput $R(\cal{S})$
\STATE $t^u \leftarrow 1$,
\FOR {$i=1:{|\cal{F}|}$}
\STATE $E_i \leftarrow B_i+C_i t^u$,
\STATE $\tau_i \leftarrow min\lbrace E_i/P_{max}, t^u \rbrace$,
\STATE $t^u \leftarrow t^u - \tau_i $,
\IF {$t^u=0$}
\STATE break,
\ENDIF
\ENDFOR
\STATE $\tau_0 \leftarrow t^u$,
\STATE $\cal{S}$ $\leftarrow \lbrace1,2,...,i\rbrace$
\STATE $R(\cal{S})$ $\leftarrow \sum_{n=1}^i \tau_n r_n$
\end{algorithmic}
\end{algorithm}

\section{Performance Evaluation} \label{sec:performance}

\begin{figure*}[t]
\centering
\includegraphics[width= \linewidth]{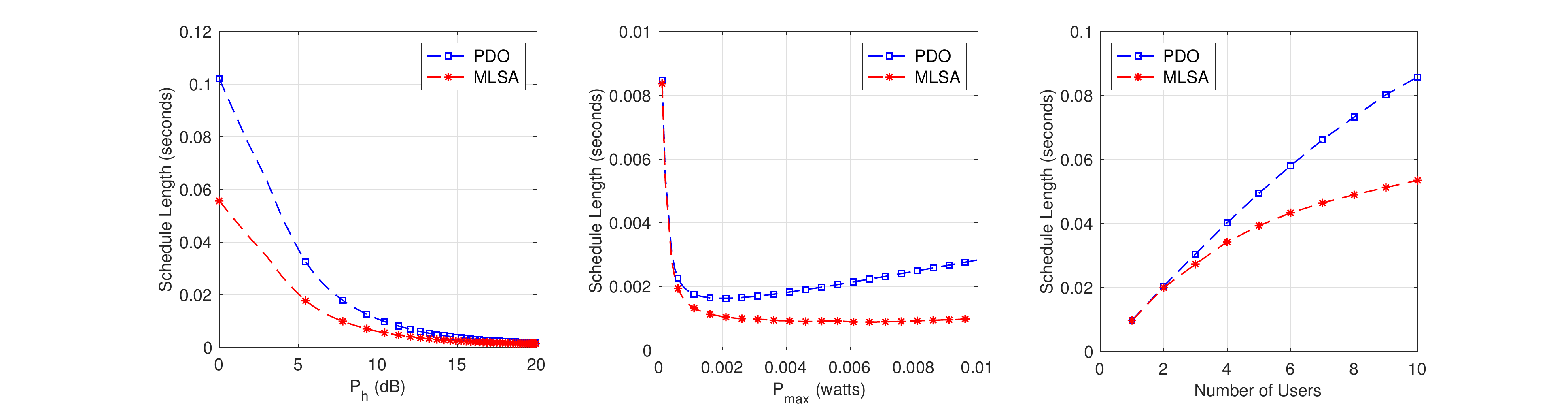}
\caption{Minimum Length Scheduling Performance} \label{fig:MLS}
\end{figure*}

\begin{figure*}[t]
\centering
\includegraphics[width= \linewidth]{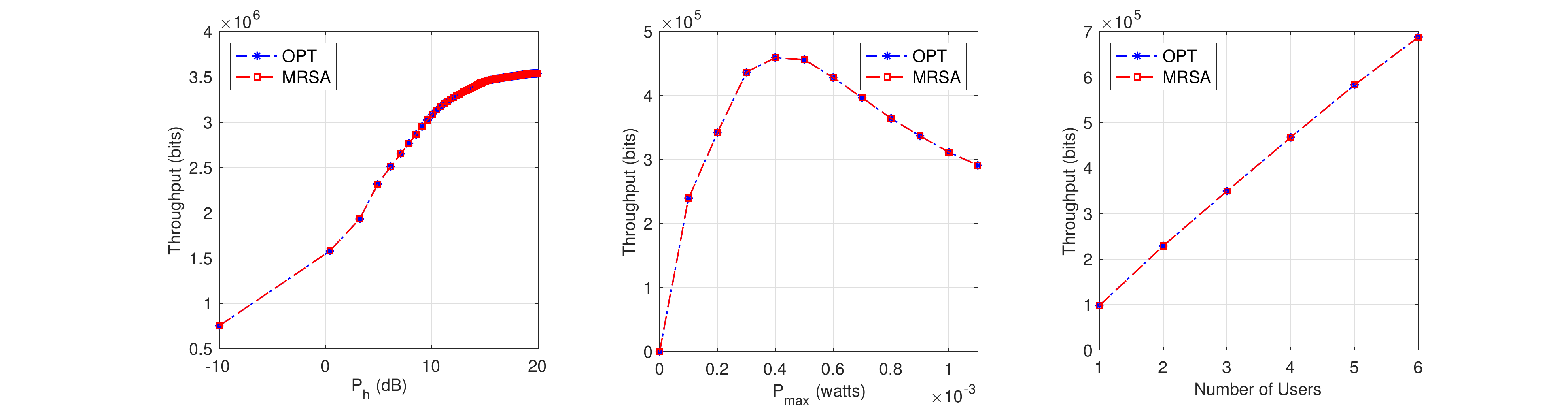}
\caption{Sum Throughput Maximization Performance} \label{fig:STM}
\end{figure*}

The goal of this section is to evaluate the performance of the proposed algorithms. Simulation results are obtained by averaging $1000$ independent random network realizations. The users are uniformly distributed within a circle with radius of $10$m. The attenuation of the links considering large-scale statistics are determined using the path loss model given by 
\begin{equation}
PL(d)=PL(d_0)+10\alpha log_{10}\bigg(\frac{d}{d_0}\bigg)+\emph{Z}
\end{equation}

where $PL(d)$ is the path loss at distance $d$, $d_0$ is the reference distance, $\alpha$ is the path loss exponent, and $Z$ is a zero-mean Gaussian random variable with standard deviation $\sigma$. The small-scale fading has been modeled by using Rayleigh fading with scale parameter  $\Omega$ set to mean power level obtained from the large-scale path loss model. The parameters used in the simulations are $\eta_i=1$ for $i \in [1,N]$; $D_i=100$ bits for $i \in [1,N]$; $W= 1$ MHz; $d_0=1$ m; $PL(d_0)=30$ dB; $\alpha=2.76$, $\sigma=4$ \cite{harvest_50}. The self interference coefficient $\beta$ is taken as $-70$ dBm.

\subsection{Minimum Length Scheduling}

In Fig. \ref{fig:MLS}, we illustrate the performance of the proposed optimal algorithm MLSA in comparison to a predetermined scheduling order based algorithm, denoted by PDO, for different scenarios. PDO simply allocates the users in a given arbitrary order and thus does not exploit the benefit of optimal scheduling to decrease the length of the scheduling frame. We first illustrate the scheduling performance for different $P_h$ values. Schedule length decreases with the increasing $P_h$ since higher HAP power indicates that users can start and thus complete their transmissions earlier in the scheduling frame since any user will be able to afford $P_{max}$ transmit power earlier via harvesting more energy from the HAP. While for large values, optimal scheduling loses its importance on the performance, for relatively small and practical values of $P_h$, MLSA outperforms PDO significantly. A similar superiority of MLSA can be observed from the figure for increasing $P_{max}$ values. As $P_{max}$ increases, performance of both algorithms initially improve since users continue to afford $P_{max}$ transmit power using their initial battery levels at the very beginning of the scheduling frame. However, above a critical value of $P_{max}$, increasing transmit power leads to increasing initial unallocated time $\tau_0$ in the scheduling frame. This results in a performance degradation for PDO while MLSA accomodates this effect by optimally determining the scheduling order. Finally, MLSA outperforms PDO for increasing number of users in the WPCN. While the schedule length almost increases linearly for PDO as the number of users increases, the increase in the schedule length diminishes for MLSA again indicating the significance of determining optimal transmission order.

\subsection{Sum Throughput Maximization}

In Fig. \ref{fig:STM}, we illustrate the throughput performance of the proposed algorithm MRSA in comparison to the optimal solution, denoted by OPT. Optimal solution is obtained by enumerating all possible transmission orders and picking the one yielding the highest throughput via solving a convex optimization problem for each transmission order. One clear observation is that MRSA performs nearly optimal on average while achieving exact optimal solutions in most network realizations. As HAP power $P_h$ increases, sum throughput yielded by MRSA increases while it saturates for large values of $P_h$ since the energy that can be used by the users in a scheduling frame is limited. For increasing $P_{max}$, sum throughput first increases since users can have higher transmission rates. Then, above certain $P_{max}$ values, users cannot afford $P_{max}$ in the beginning of the scheduling frame resulting an increase in the initial unallocated time and thus decrease in the total allocated time by the users. As the number of users increases, sum throughput achieved by the users almost increases linearly as expected ideally.


\section{Concluding Remarks} \label{sec:conclusion}

In this paper, we have investigated minimum length scheduling and sum throughput maximization problems for a full duplex WPCN considering on-off transmission scheme. For both problems, we have derived the characteristics of the optimal solution and proposed polynomial-time solution schemes. As future work, we plan extending this study for discrete rate based transmission rate model in which users can select a transmission rate from a finite set based on their SNR levels. Moreover, the WPCN architecture for multiple hybrid access points will also be investigated.

\bibliography{shahid_bib}
\bibliographystyle{ieeetr}
\end{document}